\pgfplotsset{compat=1.12}
\DeclarePairedDelimiter{\ceil}{\lceil}{\rceil}
\theoremstyle{definition}
\newtheorem{theorem}{Theorem}
\newtheorem{example}{Example}
\newtheorem{remark}{Remark}
\def\BibTeX{{\rm B\kern-.05em{\sc i\kern-.025em b}\kern-.08em
    T\kern-.1667em\lower.7ex\hbox{E}\kern-.125emX}}
\renewcommand*\env@matrix[1][*\c@MaxMatrixCols c]{%
  \hskip -\arraycolsep
  \let\@ifnextchar\new@ifnextchar
  \array{#1}}
\newcommand{\calB}{\mathcal{B}}
\newcommand{\calC}{\mathcal{C}}
\newcommand{\calG}{\mathcal{G}}
\newcommand{\calN}{\mathcal{N}}
\newcommand{\bfy}{\mathbf{y}}
\newcommand{\bfr}{\mathbf{r}}
\newcommand{\bfD}{\mathbf{D}}
\newcommand{\bfA}{\mathbf{A}}
\newcommand{\bfx}{\mathbf{x}}
\newcommand{\bfe}{\mathbf{e}}
\title{CODED MATRIX COMPUTATIONS FOR D2D-ENABLED \\ Linearized FEDERATED LEARNING}
\name{Anindya Bijoy Das$^{\dagger}$ \qquad Aditya Ramamoorthy$^{\star}$ \qquad David J. Love$^{\dagger}$ \qquad Christopher G. Brinton$^{\dagger}$}
\address{{$^{\dagger}$}School of Electrical and Computer Engineering, Purdue University, West Lafayette, IN 47907 USA\\
{$^{\star}$}Department of Electrical and Computer Engineering, Iowa State University, Ames, IA 50010 USA\\
%\texttt{das207@purdue.edu, adityar@iastate.edu, djlove@purdue.edu, cgb@purdue.edu}
}
\begin{document}
%\ninept
%
\definecolor{mygr}{rgb}{0.6,0.4,0.0}
\definecolor{my1color}{rgb}{0.6,0.4,0.0}
\definecolor{mycolor1}{rgb}{0.00000,0.44700,0.74100}%
\definecolor{mycolor2}{rgb}{0.85000,0.32500,0.09800}%
\definecolor{mycolor3}{rgb}{0.45000,0.62500,0.19800}%
\tikzset{
block/.style    = {draw, thick, rectangle, minimum height = 2em, minimum width = 2em},
sum/.style      = {draw, circle, node distance = 1cm},
sum1/.style      = {draw, circle, minimum size = 1.1 cm},
input/.style    = {coordinate},
output/.style   = {coordinate},
}
\maketitle
\begin{abstract}
Federated learning (FL) is a popular technique for training a global model on data distributed across client devices. Like other distributed training techniques, FL is susceptible to straggler (slower or failed) clients. Recent work has proposed to address this through device-to-device (D2D) offloading, which introduces privacy concerns. In this paper, we propose a novel straggler-optimal approach for coded matrix computations which can significantly reduce the communication delay and privacy issues introduced from D2D data transmissions in FL. Moreover, our proposed approach leads to a considerable improvement of the local computation speed when the generated data matrix is sparse. Numerical evaluations confirm the superiority of our proposed method over baseline approaches.
%However, stragglers (slower or failed clients) are well-known bottlenecks in distributed training. In this work, we concentrate on the job of straggler mitigation in distributed matrix computations for federated learning. A common strategy is to incorporate different coding theory techniques into the framework of distributed matrix computations. 
\end{abstract}
\begin{keywords}
Distributed Computing, Federated Learning, Stragglers, Heterogeneous Edge Computing, Privacy.
\end{keywords}
\vspace{-0.1 in}
\section{Introduction}
\vspace{-0.05 in}
\label{sec:intro}

Contemporary computing platforms are hard-pressed to support the growing demands for AI/ML model training at the network edge. While advances in hardware serve as part of the solution, the increasing complexity of data tasks and volumes of data will continue impeding scalability. In this regard, federated learning (FL) has become a popular technique for training machine learning models in a distributed manner \cite{prakash2020coded, wang2022uav, 9606848}. In FL, the edge devices carry out the local computations, and the server collects, aggregates and updates the global model.  

Recent approaches have looked at linearizing the training operations in FL \cite{prakash2020coded, dhakal2019coded}. This is advantageous as it opens the possibility for coded matrix computing techniques that can improve operating efficiency. Specifically, in distributed settings like FL, the overall job execution time is often dominated by slower (or failed) worker nodes, which are referred to as stragglers. Recently, a number of coding theory techniques \cite{lee2018speeding, dutta2016short, yu2017polynomial, das2020coded, 8849468, tandon2017gradient, dasunifiedtreatment, 8765375, 8919859,9785638} have been proposed to mitigate stragglers in distributed matrix multiplications. A toy example \cite{lee2018speeding} of such a technique for computing $\bfA^T \bfx$ across three clients is to partition $\bfA$ as $\bfA = [\bfA_0 ~|~ \bfA_1]$, and to assign them the job of computing $\bfA^T_0 \bfx$, $\bfA^T_1 \bfx$ and $\left(\bfA_0+\bfA_1\right)^T \bfx$, respectively. In a linearized FL setting, $\bfA \in \mathbb{R}^{t \times r}$ is the data matrix and $\bfx \in \mathbb{R}^{t}$ is the model parameter vector. While each client has half of the total computational load, the server can recover $\bfA^T \bfx$ if {\it any} two clients return their results, i.e., the system is resilient to one straggler. If each of $n$ clients computes $1/k_A$ fraction of the whole job of computing $\bfA^T \bfx$, the number of stragglers that the system can be resilient to is upper bounded by $n - k_A$ \cite{yu2017polynomial}. 

In contemporary edge computing systems, task offloading via device-to-device (D2D) communications has also been proposed for straggler mitigation. D2D-enabled FL has recently been studied \cite{wang2021device, tu2020network, wang2022uav}, but can add considerable communication overhead as well as compromise data privacy. In this work, we exploit matrix coding in linearized FL to mitigate these challenges. Our straggler-optimal matrix computation scheme reduces the communication delay significantly compared to the techniques in \cite{yu2017polynomial, 8849468, 8765375}. Moreover, unlike \cite{yu2017polynomial, 8849468, 8765375, 8919859, das2019random}, our scheme allows a client to access a limited fraction of matrix $\bfA$, and provides a considerable protection against information leakage. In addition, our scheme is specifically suited to sparse matrices with a significant gain in computation speed.

\vspace{-0.025 in}
\section{Network and Learning Architecture}
\vspace{-0.045 in}
\label{sec:system}
We consider a D2D-enabled FL architecture consisting of $n = k_A + s$ clients, denoted as $W_i$ for $i = 0,1, \dots, n -1$. The first $k_A$ of them are active clients (responsible for both data generation and local computation) and the next $s < k_A$ are passive clients (responsible for local computation only).

Assume that the $i$-th device has local data $(\bfD_i, \bfy_i)$, where $\bfD_i$ and $\bfy_i$ are the block-rows of full system dataset $(\bfD, \bfy)$. Under a linear regression-based ML model, the global loss function is quadratic, i.e., $f(\beta_{\ell}) = ||\bfD \beta_{\ell} - \bfy||^2$, where the model parameter after iteration $\ell$ is obtained through gradient methods as $\beta_{\ell}  = \beta_{\ell - 1} - \mu_\ell \nabla_\beta f(\beta_{\ell -1})$ and $\mu_\ell$ is the stepsize. Based on the form of $\nabla_{\beta}f(\beta_{\ell})$, the FL local model update at each device includes multiplying the local data matrix $\bfD_i$ with parameter $\beta_\ell$. For this reason, recent work has also investigated linearizing non-linear models for FL by leveraging kernel embedding techniques \cite{prakash2020coded}. Thus, our aim is to compute $\mathbf{A}^T \mathbf{x}$ -- an arbitrary matrix operation during FL training -- in a distributed fashion such that the system is resilient to $s$ stragglers. Our assumption is that any active client $W_i$ generates a block-column of matrix $\bfA$, denoted as $\bfA_i$, $i = 0, 1, \dots, k_A - 1$, such that
\vspace{-0.05in}
\begin{align}
\label{eq:disjointA}
\bfA = \begin{bmatrix} 
\bfA_0 & \bfA_1 & \dots & \bfA_{k_A - 1} 
\end{bmatrix}.
\end{align} 
\vspace{-0.25in}

\vspace{0.05in}In our approach, every client is responsible to compute the product of a coded submatrix (linear combinations of some block-columns of $\bfA$) and the vector $\bfx$. Stragglers will arise in practice from computing speed variations or failures experienced by the clients at particular times \cite{hosseinalipour2020federated, das2020coded, das2019random}. Now, similar to \cite{wang2021device, tu2020network, wagle2022embedding}, we assume that there is a set of trusted neighbor clients for every device to transmit its data via D2D communications. The passive clients receive coded submatrices only from active clients. Unlike the approaches in \cite{dhakal2019coded, yoshida2020hybrid, prakash2020coded, 9606848}, we assume that the server cannot access to any uncoded/coded local data generated in the edge devices and is only responsible for transmission of vector $\bfx$ and for decoding $\bfA^T \bfx$ once the fastest clients return the computed submatrix-vector products.  

\vspace{-0.1 in}

\vspace{-0.05 in}
\section{Homogeneous Edge Computing}
\vspace{-0.05 in}
\label{sec:homogeneous_system}
\setlength{\textfloatsep}{2pt}
\begin{algorithm}[t]
	\caption{Proposed scheme for distributed matrix-vector multiplication}
	\label{Alg:New_matvec}
   \SetKwInOut{Input}{Input}
   \SetKwInOut{Output}{Output}
   \Input{Matrix $\bfA_i$ generated in active client $i$ for $i = 0, 1, \dots, k_A - 1$, vector $\bfx$, total $n$ clients including $s < k_A$ passive clients.}
   Set weight $\, \omega_A = s+1$ \;
   Denote client $i$ as $W_i$, for $i = 0, 1, \dots, n - 1$\;
   \For{$i\gets 0$ \KwTo $k_A-1$}{
   Define $T_i = \left\lbrace i+1, \dots, i + \omega_A - 1 \right\rbrace$ (mod $k_A$)\;
   Send $\bfA_j$, where $j \in T_i$, from $W_j$ to $W_i$\;
   Client $W_i$ creates a random vector $\bfr$ of length $k_A$, computes $\tilde{\bfA}_i = \sum_{q \in T_i} r_{q} \bfA_q$ and $\tilde{\bfA}_i^T \bfx$\;
   }
   \For{$i\gets 0$ \KwTo $s-1$}{
   $W_i$ creates random vector $\tilde{\bfr}$ of size $k_A$, computes $\tilde{\bfA}_{k_A+i} = \sum_{q \in T_i} \tilde{r}_{q} \bfA_q$ and sends to  $W_{k_A + i}$\;
   Client $W_{k_A + i}$ computes $\tilde{\bfA}_{k_A + i}^T \bfx$\;
   }
   \Output{The server recovers $\bfA^T \bfx$ from the returned results by the fastest $k_A$ clients.}
\end{algorithm}

Here we assume that each active client generates equal number of columns of $\bfA$ (i.e. all $\bfA_i$'s have the same size in \eqref{eq:disjointA}) and all the clients are rated with the same computation speed. In this scenario, we propose a distributed matrix-vector multiplication scheme in Alg. \ref{Alg:New_matvec} which is resilient to any $s$ stragglers.  

The main idea is that any active client $W_j$ generates $\bfA_j$, for $0 \leq j \leq k_A - 1$ and sends it to another active client $W_i$, if $ j = i + 1, i+2, \dots, i+\omega_A - 1$ (\textrm{ modulo} $\, k_A)$. Here we set $\omega_A = s+1$, thus, any data matrix $\bfA_j$ needs to be sent to only $\omega_A - 1 = s$ other clients. Then, active client $W_j$ computes a linear combination of $\bfA_i, \bfA_{i + 1}, \dots, \bfA_{i+\omega_A - 1} \, $ (indices modulo $\, k_A$) where the coefficients are chosen randomly from a continuous distribution. Next, active client $W_i$ sends another random linear combination of the same submatrices to $W_{i+ k_A}$ (a passive client), when $i = 0, 1, \dots, s - 1$. Note that all $n$ clients receive the vector $\bfx$ from the server. Now the job of each client is to compute the product of their respective coded submatrix and the vector $\bfx$. Once the fastest $k_A$ clients finish and send their computation results to the server, it decodes $\bfA^T \bfx$ using the corresponding random coefficients. The following theorem establishes the resiliency of Alg. \ref{Alg:New_matvec} to stragglers.

%Following previous approaches, we will first partition matrix $\bfA$ into some disjoint block-columns. Next, we will assign a random linear combination of some block-columns of $\bfA$ and the vector $\bfx$ to each worker node. However, as discussed in Sec. \ref{sec:intro}, assigning dense linear combinations can destroy the inherent sparsity of the corresponding matrices. Instead, we aim to assign linear combinations of lesser number of submatrices. To quantify this, we define the ``weight'' of the encoded submatrices as follows, which can serve as a crucial metric for sparse matrices in distributed computations.  

\vspace{-0.02in}
\begin{theorem}
\label{thm:matvec}
Assume that a system has $n$ clients including $k_A$ active and $s$ passive clients. If we assign the jobs according to Alg. \ref{Alg:New_matvec}, we achieve resilience to any $s = n - k_A$ stragglers.
\end{theorem}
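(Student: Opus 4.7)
The plan is to recast decoding as the invertibility of a structured $k_A\times k_A$ coefficient submatrix and then verify invertibility via a Hall/Schwartz--Zippel argument. Writing $\tilde{\bfA}_i=\sum_q c_{iq}\bfA_q$, let $\bfC$ be the $n\times k_A$ matrix of coding coefficients. If the fastest $k_A$ returning clients are indexed by $S\subseteq\{0,\ldots,n-1\}$ with $|S|=k_A$, the server can recover $\bfA^T\bfx$ exactly when the submatrix $\bfC_S$ is invertible; hence it suffices to show $\bfC_S$ is non-singular with probability one for every such $S$.

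First I would record the support structure. By construction each row of $\bfC$ is supported on a cyclic window $\calW_j:=\{j,j+1,\ldots,j+s\}\pmod{k_A}$ of length $\omega_A=s+1$ for some $j\in\{0,\ldots,k_A-1\}$: active client $i$ uses $\calW_i$, and in addition, for each $j\in\{0,\ldots,s-1\}$, the passive client $k_A+j$ also uses $\calW_j$ but with independent random coefficients. Because every non-zero entry of $\bfC$ is drawn independently from a continuous distribution, $\det(\bfC_S)$ is a polynomial in these variables; by Schwartz--Zippel it vanishes on a measure-zero set whenever it is not identically zero. The classical bipartite-matching / Edmonds characterisation of symbolic determinants then reduces non-vanishing of $\det(\bfC_S)$ to the existence of a perfect matching in the bipartite graph $G_S$ whose left vertices are the rows indexed by $S$, whose right vertices are $\{0,\ldots,k_A-1\}$, and whose edges connect each row to the columns in its support window.

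The crux of the proof, and the step I expect to require the most care, is verifying Hall's condition on $G_S$: the union of support windows of any $T\subseteq S$ has size at least $|T|$. Let $I_T\subseteq\{0,\ldots,k_A-1\}$ denote the set of distinct window indices appearing in $T$, and set $d_T:=|I_T\cap\{0,\ldots,s-1\}|$. Since each index in $\{0,\ldots,s-1\}$ contributes at most two rows to $T$ (one active, one passive) and every other index at most one, $|T|\le|I_T|+d_T\le|I_T|+s$. On the coverage side, a short combinatorial count on $\mathbb{Z}_{k_A}$ -- by analysing maximal runs of omitted indices in $\{0,\ldots,k_A-1\}\setminus I_T$, each run of length $\ell$ contributing $\max(0,\ell-s)$ uncovered positions -- shows that any $m$ cyclic windows of length $s+1$ cover at least $\min(m+s,k_A)$ positions. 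Combining this with the trivial bound $|T|\le|S|=k_A$ gives $|T|\le\min(|I_T|+s,k_A)\le|\bigcup_{j\in I_T}\calW_j|$, establishing Hall's condition.

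Given Hall's condition, $\det(\bfC_S)$ is a non-zero polynomial in the continuous random coefficients and hence non-zero with probability one. As this holds uniformly over all $|S|=k_A$, the server can decode $\bfA^T\bfx$ whenever any $k_A$ clients finish, which is exactly resilience to $s=n-k_A$ stragglers.
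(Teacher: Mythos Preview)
Your proposal is correct and follows essentially the same route as the paper: reduce recoverability to invertibility of the $k_A\times k_A$ coefficient submatrix, verify Hall's condition on the bipartite row--column support graph to exhibit a perfect matching, and invoke Schwartz--Zippel (equivalently, the measure-zero property of polynomial zero sets under a continuous distribution) to conclude. The one notable difference is in how Hall's condition is checked: the paper splits into the cases $m\le 2s$ and $m>2s$ and appeals to an external cyclic-coverage lemma, whereas you handle both at once via the clean inequality $|T|\le |I_T|+d_T\le |I_T|+s$ together with a self-contained ``gap-run'' proof that $|I_T|$ windows of width $s+1$ on $\mathbb{Z}_{k_A}$ cover at least $\min(|I_T|+s,k_A)$ points. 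Your organization is slightly tighter and avoids the case split, but the underlying ideas are the same.
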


\begin{proof}
In order to recover $\bfA^T \bfx$, according to \eqref{eq:disjointA}, we need to decode all $k_A$ vector unknowns, $\bfA^T_0 \bfx, \bfA^T_1 \bfx, \dots, \bfA^T_{k_A - 1} \bfx$; we denote the set of these unknowns as $\calB$. Now we choose an arbitrary set of $k_A$ clients each of which corresponds to an equation in terms of $\omega_A$ of those $k_A$ unknowns. Denoting the set of $k_A$ equations as $\calC$,  we have $|\calB| = |\calC| = k_A$. 

Now we consider a bipartite graph $\calG = \calC \cup \calB$, where any vertex (equation) in $\calC$ is connected to some vertices (unknowns) in $\calB$ which have participated in the corresponding equation. Thus, each vertex in $\calC$ has a neighborhood of cardinality $\omega_A$ in $\calB$.
%such an example with $k_A = 5$ and $\omega_A = 3$ is shown in Fig. \ref{fig:hall_bipartite}. 
Our goal is to show that there exists a perfect matching among the vertices of $\calC$ and $\calB$. We argue this according to Hall's marriage theorem \cite{marshall1986combinatorial} for which we need to show that for any $\bar{\calC} \subseteq \calC$, the cardinality of the neighbourhood of $\bar{\calC}$, denoted as $\calN (\bar{\calC}) \subseteq \calB$, is at least as large as $|\bar{\calC}|$. Thus, for $|\bar{\calC}| = m \leq k_A$, we need to show that $|\calN (\bar{\calC})| \geq m$.

{\it Case 1}: First we consider the case that $m \leq 2s$. We assume that $m = 2p, 2p - 1$ where $1 \leq p \leq s$. Now according to Alg. \ref{Alg:New_matvec}, the participating unknowns are shifted in a cyclic manner among the equations. If we choose any $\delta$ clients out of the first $k_A$ clients $\left(W_0, W_1, W_2, \dots, W_{k_A - 1} \right)$, according to the proof of cyclic scheme in Appendix C in \cite{das2020coded}, the minimum number of total participating unknowns is $\textrm{min} (\omega_A + \delta - 1, k_A)$, where $\omega_A = s + 1$. Now according to Alg. \ref{Alg:New_matvec}, same unknowns participate in two different equations corresponding to two different clients, $W_j$ and $W_{k_A + j}$, where $j = 0, 1, \dots, s-1$. Thus, for any $|\bar{\calC}| = m = 2p, 2p - 1 \leq 2s$, we have

\vspace{-0.25in}
\begin{align*}
|\calN (\bar{\calC})| & \geq \textrm{min} \left( \omega_A + \ceil{m/2} - 1, k_A \right) \\ & = \textrm{min} \left( \omega_A + p - 1, k_A \right)  = \textrm{min} \left( s + p, k_A \right) \geq m.
\end{align*}
\vspace{-0.25in}

{\it Case 2}: Now we consider the case where $m = 2s + q$, $1 \leq q \leq k_A - 2s$. We need to find the minimum number of unknowns which participate in any set of $m$ equations. Now, the same unknowns participate in two different equations corresponding to two different clients, $W_j$ and $W_{k_A + j}$, where $j = 0, 1, \dots, s-1$. Thus, the additional $q$ equations correspond to at least $q$ additional unknowns until the total number of participating unknowns is $k_A$. Therefore, in this case

\vspace{-0.25in}
\begin{align*}
|\calN (\bar{\calC})| &\geq \textrm{min} \left( \omega_A + \ceil{2s/2} + q - 1, k_A \right) \\  = \textrm{min} & \left( \omega_A + s + q - 1, k_A \right) = \textrm{min} \left( 2s + q, k_A \right) \geq m.  
\end{align*}
\vspace{-0.25in}

\noindent Thus, for any $m \leq k_A$ (where $|\bar{\calC}| = m$), we have shown that $|\calN (\bar{\calC})| \geq |\bar{\calC}|$. So, there exists a perfect matching among the vertices of $\calC$ and $\calB$ according to Hall's marriage theorem.

Now we consider the largest matching where vertex $c_i \in \calC$ is matched to vertex $b_j \in \calB$, which indicates that $b_j$ participates in the equation corresponding to $c_i$. Let us consider a $k_A \times k_A$ system matrix where row $i$ corresponds to the equation associated to $c_i$. Now we replace this row $i$ by $\bfe_j$ which is a unit row-vector of length $k_A$ with $j$-th entry being $1$, and $0$ otherwise. Thus we have a $k_A \times k_A$ matrix where each row has only one non-zero entry which is $1$. Since we have a perfect matching, this $k_A \times k_A$ matrix has only one non-zero entry in every column. This is a permutation of the identity matrix, and thus, is full rank. Since the matrix is full rank for a choice of definite values, according to Schwartz-Zippel lemma \cite{schwartz1980fast}, it will be full rank for random choices of non-zero entries. Thus, the server can recover all $k_A$ unknowns from any $k_A$ clients, hence the system is resilient to any $s = n - k_A$ stragglers.
\end{proof}
\vspace{-0.05 in}

\vspace{-0.1in}
\begin{example}
\begin{figure}[t]
\centering
\begin{subfigure}
\centering
\resizebox{0.88\linewidth}{!}{

\definecolor{mycolor6}{rgb}{0.92941,0.69412,0.12549}%
\definecolor{mycolor7}{rgb}{0.74902,0.00000,0.74902}%
\definecolor{mycolor8}{rgb}{0.60000,0.20000,0.00000}%

\begin{tikzpicture}[auto, thick, node distance=2cm, >=triangle 45]

\draw
    
    node [sum, minimum size = 1.15cm, fill=blue!30] (blk1) {$W0$}
    node [sum, minimum size = 1.15cm,fill=blue!30,right = 1 cm of blk1] (blk2) {$W1$}
    node [sum, minimum size = 1.15cm,fill=blue!30,right = 1 cm of blk2] (blk3) {$W2$}
    node [sum, minimum size = 1.15cm,fill=blue!30,right = 3 cm of blk3] (blk4) {$W8$}
    node [sum, minimum size = 1.15cm,fill=blue!30,right = 1 cm of blk4] (blk5) {$W9$}

    node [block, fill=orange!30, minimum width = 4.8em, below = 0.5 cm of blk1] (blk11) {\Large ${\bfA}_0 $}
    node [block, fill=orange!30, minimum width = 4.8em, below = 0.5 cm of blk2] (blk21) {\Large${\bfA}_1$}
    node [block, fill=orange!30, minimum width = 4.8em, below = 0.5 cm of blk3] (blk31) {\Large${\bfA}_2$}
    node [block, fill=orange!30, minimum width = 4.8em, below = 0.5 cm of blk4] (blk41) {\Large${\bfA}_8$}
    node [block, fill=orange!30, minimum width = 4.8em, below = 0.5 cm of blk5] (blk51) {\Large${\bfA}_9$}

node at (6.4,-1) (blk00) {$\dots \dots \dots$}
node at (5,-2.5) (blk0) {\Large (a): Data generation among the active clients.}

;
\draw[->](blk1) -- node{} (blk11);
\draw[->](blk2) -- node{} (blk21);
\draw[->](blk3) -- node{} (blk31);
\draw[->](blk4) -- node{} (blk41);
\draw[->](blk5) -- node{} (blk51);

\end{tikzpicture}
}
\end{subfigure} \vspace{-0.05 in}
\begin{subfigure}
\centering
\;
\resizebox{0.88\linewidth}{!}{

\definecolor{mycolor6}{rgb}{0.92941,0.69412,0.12549}%
\definecolor{mycolor7}{rgb}{0.74902,0.00000,0.74902}%
\definecolor{mycolor8}{rgb}{0.60000,0.20000,0.00000}%

\begin{tikzpicture}[auto, thick, node distance=2cm, >=triangle 45]

\draw
 node [sum, minimum size = 1.15cm, fill=blue!30] (blk1) {$W0$}
    node [sum, minimum size = 1.15cm,fill=blue!30,right = 3.5 cm of blk1] (blk2) {$W9$}
    node [sum, minimum size = 1.15cm,fill=green!30,right = 1.7 cm of blk2] (blk_11) {$W10$}
    node [sum, minimum size = 1.15cm,fill=green!30,right = 1.7 cm of blk_11] (blk_12) {$W11$}

    node [block, fill=mycolor6!30, minimum width = 6.8em, below = 0.5 cm of blk1] (blk11) {$\left\lbrace\bfA_0, \bfA_1, \bfA_2 \right\rbrace$}
    node [block, fill=mycolor6!30, minimum width = 6.8em, below = 0.5 cm of blk2] (blk21) {$\left\lbrace\bfA_9, \bfA_0, \bfA_1 \right\rbrace$}
    node at (2.3,-1) (blk00) {$\dots \dots \dots$}
    node [block, fill=mycolor6!30, minimum width = 6.8em, below = 0.5 cm of blk_11] (blk_111) {$\left\lbrace\bfA_0, \bfA_1, \bfA_2 \right\rbrace$}
    node [block, fill=mycolor6!30, minimum width = 6.8em, below = 0.5 cm of blk_12] (blk_121) {$\left\lbrace\bfA_1, \bfA_2, \bfA_3 \right\rbrace$}
    node at (5,-2.5) (blk00) {\Large (b): Coded submatrix allocation among all the clients.}
;
\draw[->](blk1) -- node{} (blk11);
\draw[->](blk2) -- node{} (blk21);

\draw[->](blk_11) -- node{} (blk_111);
\draw[->](blk_12) -- node{} (blk_121);

\end{tikzpicture}
}
\end{subfigure}
\vspace{-0.1 in}
\caption{\small (a) Data generation and (b) submatrix allocation for $n = 12$ clients according to Alg. \ref{Alg:New_matvec} including $k_A = 10$ active and $s = 2$ passive clients. Any $\{\bfA_j, \bfA_k, \bfA_{\ell}\}$ indicates a random linear combination of the corresponding submatrices. Any $W_i$ obtains a random linear combination of $\bfA_i, \bfA_{i+1}$ and $\bfA_{i+2}$ (indices reduced mod $10$).}
\label{matvec12}
\vspace{0.05 in}
\end{figure}
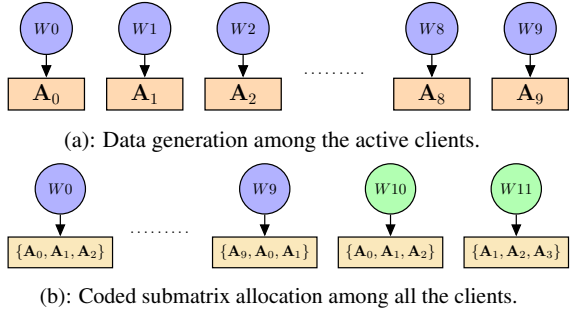
Consider a homogeneous system of $k_A = 10$ active clients and $s = 2$ passive clients. According to Alg. \ref{Alg:New_matvec}, $\omega_A = s + 1 = 3$, and client $W_i$ ($0 \leq i \leq 11$)  has a random linear combination of $\bfA_i, \bfA_{i + 1}$ and $ \bfA_{i + 2} \, \left(\textrm{indices modulo} \, 10 \right)$ as shown in Fig. \ref{matvec12}. Thus, according to Theorem \ref{thm:matvec}, this system is resilient to $s = 2$ stragglers. Note that our scheme requires any active client to send its local data matrix to only up to $s + 1 = 3$ other clients, thus involves a significantly lower communication cost in comparison to the approaches in \cite{yu2017polynomial, 8849468}.
\end{example}
\vspace{-0.05 in}

\vspace{-0.1in}
\begin{remark}
In comparison to \cite{yu2017polynomial, 8849468, 8919859}, our proposed approach is specifically suited to sparse data matrices, i.e., most of the entries of $\bfA$ are zero. The approaches in \cite{yu2017polynomial, 8849468, 8919859} assign dense linear combinations of the submatrices which can destroy the inherent sparsity of $\bfA$, leading to slower computation speed for the clients. On the other hand, our approach assigns linear combinations of limited number of submatrices which preserve the sparsity up to certain level that leads to faster computation.
\end{remark}
\vspace{-0.15 in}

\vspace{-0.05 in}
\section{Heterogeneous Edge Computing}
\vspace{-0.05 in}
\label{sec:hetero}
In this section, we extend our approach in Alg. \ref{Alg:New_matvec} to heterogenous system where the clients may have different data generation capability and different computation speeds. We assume that we have $\lambda$ different types of devices in the system, with client type $j= 0, 1, \dots, \lambda - 1$. Moreover, we assume that any active client $W_i$ generates $\alpha_i = c_{ij} \alpha$ columns of data matrix $\bfA$ and any client $W_i$ has a computation speed $\beta_i = c_{ij} \beta$, where $W_i$ is of client type $j$ and $c_{ij} \geq 1$ is an integer. Thus, a higher $c_{ij}$ indicates a ``stronger'' type client $W_i$ which can process at a $c_{ij}$ times higher computation speed than the ``weakest'' type device, where $\alpha$ is the number of the assigned columns and $\beta$ is the number of processed columns per unit time in the ``weakest'' type device. Note that $\lambda = 1$ and all $c_{ij} = 1$ lead us to the homogeneous system discussed in Sec. \ref{sec:homogeneous_system} where $0 \leq i \leq n -1$ and $j = 0$. 

Now, we have $n = k_A + s$ clients including $k_A$ active and $s$ passive clients in the heterogeneous system. Aligned to the homogeneous system, we assume that the number of passive clients of any type $j$ is less than the number of active clients of the same type. Next, without loss of generality, we sort the indices of active clients in such a way so that, $c_{i j} \geq c_{kj}$ if $i \leq k$, for $0 \leq i, k \leq k_A - 1$. We do the similar sorting for the passive clients too so that $c_{i j} \geq c_{kj}$ if $i \leq k$, for $k_A \leq i, k \leq n - 1$. Now if a client $W_i$ is of client type $j$, it requires the same time to process $c_{ij} \geq 1$ block-columns (each consisting of $\alpha$ columns) of $\bfA$  as the ``weakest'' device to process $c_{ij} = 1$ such block-column. Moreover, if it is an active client, it also generates $\alpha_i = c_{ij} \alpha$ columns of data matrix $\bfA$. Thus, client $W_i$ can be thought as a collection of $c_{ij}$ homogeneous clients of ``weakest'' types where each of the active ``weakest'' clients generates equally $\alpha$ columns of $\bfA$ and each of the ``weakest'' clients processes equally $\alpha$ columns.

%The active clients generate the columns of $\bfA \in \mathbb{R}^{t \times r}$, thus, $\sum\limits_{i = 0}^{k_A - 1} \alpha_i = r$. 
% Thus we denote the total number of such block-columns assigned to the active and passive clients as $\bar{k}_A$ and $\bar{s}$, respectively, which are given by
% \begin{align}
% \label{eq:kasmbar}
%     \bar{k}_A = \sum\limits_{i = 0}^{k_A - 1} c_{ij}\; \; \textrm{and} \; \;     \bar{s} = \sum\limits_{i = k_A}^{n - 1} c_{ij} ,
% \end{align}where client $W_i$ is of client type $j$. 

\vspace{-0.01in}
\begin{theorem}
(a) A heterogeneous system of $k_A$ active and $s$ passive clients of different types can be considered as a homogeneous system of $\bar{k}_A = \sum_{i = 0}^{k_A - 1} c_{ij}$ active and $\bar{s} = \sum_{i = k_A}^{n - 1} c_{ij}$ passive clients of the ``weakest'' type. Next (b) if the jobs are assigned according to Alg. \ref{Alg:New_matvec} in the modified homogeneous system of $\bar{n} = \bar{k}_A + \bar{s}$ ``weakest'' clients, the system can be resilient to $\bar{s}$ such clients.
\end{theorem}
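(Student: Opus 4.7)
The plan is to reduce the heterogeneous setting to the homogeneous setting of Section~\ref{sec:homogeneous_system} by subdividing each real client into a collection of ``virtual'' weakest-type sub-clients, and then invoke Theorem~\ref{thm:matvec} directly.

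For part (a), I would start from the assumption that any active client $W_i$ of type $j$ generates $\alpha_i = c_{ij}\alpha$ columns of $\bfA$ and processes columns at a rate $\beta_i = c_{ij}\beta$. This means that in the same wall-clock time during which a weakest-type client generates or multiplies one block of $\alpha$ columns, $W_i$ can handle $c_{ij}$ such blocks. I would then partition the $c_{ij}\alpha$ columns generated by $W_i$ into $c_{ij}$ contiguous blocks of $\alpha$ columns each, and view each block as belonging to a distinct virtual active sub-client operating at the weakest speed $\beta$. The same subdivision is applied to every passive client $W_i$ (using its processing speed, not data generation, to determine $c_{ij}$). Summing across all real active and passive clients yields exactly $\bar{k}_A = \sum_{i=0}^{k_A-1} c_{ij}$ virtual active and $\bar{s} = \sum_{i=k_A}^{n-1} c_{ij}$ virtual passive clients, all of the weakest type; this new system satisfies the homogeneity assumptions of Section~\ref{sec:homogeneous_system}. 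I would also check that the hypothesis ``number of passive clients of any type is less than the number of active clients of the same type'' lifts to $\bar{s} < \bar{k}_A$, so that Algorithm~\ref{Alg:New_matvec} is applicable.

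For part (b), the bulk of the work is already done: once the virtual system is set up, applying Algorithm~\ref{Alg:New_matvec} with parameter $\bar{\omega}_A = \bar{s}+1$ gives a distributed matrix-vector multiplication scheme over $\bar{n} = \bar{k}_A + \bar{s}$ homogeneous weakest-type clients. Theorem~\ref{thm:matvec} then immediately certifies resilience to any $\bar{s}$ virtual stragglers, which is exactly the statement to be proved.

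The main conceptual obstacle is ensuring that the virtual decomposition is not merely a bookkeeping trick but actually respects the operational model. Concretely, I would need to argue that (i) random coefficients drawn for distinct virtual sub-clients remain independent and thus the Schwartz--Zippel argument in Theorem~\ref{thm:matvec} still applies, and (ii) the intra-device ``D2D transfers'' between virtual sub-clients sitting on the same real device are cost-free, so the communication bound implied by $\bar{\omega}_A$ remains meaningful (and in fact typically improves when the clients are strong). A minor but worthwhile remark is that a real straggler $W_i$ of type $j$ simultaneously kills all $c_{ij}$ of its virtual sub-clients, so the resulting real-world resilience statement is that the system tolerates any set of real stragglers whose aggregate weight $\sum c_{ij}$ is at most $\bar{s}$; the stated $\bar{s}$-resilience in the modified homogeneous system is the cleanest uniform guarantee and is what directly follows from Theorem~\ref{thm:matvec}.
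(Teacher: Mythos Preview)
Your proposal is correct and follows essentially the same approach as the paper: partition each heterogeneous client into $c_{ij}$ virtual weakest-type sub-clients (both for data generation and for computation speed), reassemble $\bfA$ into $\bar{k}_A$ equal block-columns, and then invoke Theorem~\ref{thm:matvec} on the resulting homogeneous system of $\bar{k}_A$ active and $\bar{s}$ passive virtual clients. Your additional remarks on independence of the random coefficients, cost-free intra-device transfers, and the aggregate-weight interpretation of real-world straggler resilience go a bit beyond what the paper spells out, but they are consistent with (and indeed clarify) the paper's argument.
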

\vspace{-0.1in}

\begin{proof}
Each $\bfA_k$ (generated in $W_k$) in \eqref{eq:disjointA} is a block-column consisting of $c_{kj} \alpha$ columns of $\bfA$ when client $W_k$ is of client type $j$. Thus, for any $k = 0, 1, \dots, k_A - 1$, we can partition $\bfA_k$ as
$\bfA_k = \begin{bmatrix} 
\bar{\bfA}_m & \bar{\bfA}_{m+1} & \dots & \bar{\bfA}_{m + c_{kj} - 1} 
\end{bmatrix}$, where $m = \sum_{i = 0}^{k-1} c_{ij}$  and each $\bar{\bfA}_{\ell}$ is a block-column consisting of $\alpha$ columns of $\bfA$, $m \leq \ell \leq m + c_{kj} - 1$. Thus using \eqref{eq:disjointA}, we can write $\bfA = \begin{bmatrix} 
\bfA_0 & \bfA_1 & \dots & \bfA_{\bar{k}_A - 1} 
\end{bmatrix}$, where $\bar{k}_A = \sum_{i = 0}^{k_A - 1} c_{ij}$. Now from the matrix generation perspective, $k_A$ active clients in a heterogeneous system generating $\bar{k}_A$ block-columns can be considered as the same as $\bar{k}_A$ active clients in a homogeneous system generating {\it one} block-column each. 

Similarly, any client $W_i$ of type $j$ can process $c_{ij} \alpha$ columns in the same time when the ``weakest'' type device can process $\alpha$ columns. Thus, from the computation speed perspective, $k_A$ active clients and $s$ passive clients in the heterogeneous system can be thought as $\bar{k}_A = \sum_{i = 0}^{k_A-1} c_{ij}$ active clients and $\bar{s} = \sum_{i = k_A}^{n-1} c_{ij}$ passive clients, respectively, in a homogeneous system by assigning $\alpha$ coded block-columns to each client. Hence, we are done with the proof of part (a). Moreover, part (b) of the proof is straight-forward from Theorem \ref{thm:matvec} when we have $\bar{k}_A$ active and $\bar{s}$ passive clients.
\end{proof}
\vspace{-0.05in}

\begin{remark}
The heterogeneous system is resilient to $\bar{s}$ block-column processing. The number of straggler clients that the system is resilient to can vary depending on the client types.
\end{remark}
\vspace{-0.05in}
%Thus, the system can be resilient to $s$ clients which are of the ``weakest'' type. If we consider the resilience to ``stronger'' devices, the number of stragglers that the system is resilient to will be smaller.
    
%\item The resilience to the processing of {\it any} $\bar{s}$ block-columns gives us the opportunity to utilize the partial computations of the clients where multiple block-columns are assigned. Thus the partial computations of a slower client could be useful. 

%\vspace{-0.1in}
\begin{example}
\label{exmpl:hetero}
\begin{figure}[t]
\centering
\begin{subfigure}
\centering
\resizebox{0.83\linewidth}{!}{

\definecolor{mycolor6}{rgb}{0.92941,0.69412,0.12549}%
\definecolor{mycolor7}{rgb}{0.74902,0.00000,0.74902}%
\definecolor{mycolor8}{rgb}{0.60000,0.20000,0.00000}%

\begin{tikzpicture}[auto, thick, node distance=2cm, >=triangle 45]

\draw
    
    node [sum, minimum size = 1.15cm, fill=blue!30] (blk1) {$W2$}
    node [block,minimum height = 0.85cm, minimum width = 1.5cm,fill=blue!30] at (-2.5,0.3) (blk4) {$W1$}
    node [block, minimum height = 0.85cm, minimum width = 1cm,fill=blue!30,left = 1.3 cm of blk4] (blk5) {$W0$}
    node [sum, minimum size = 1.15cm,fill=blue!30,right = 1 cm of blk1] (blk2) {$W3$}
    node [sum, minimum size = 1.15cm,fill=blue!30,right = 1 cm of blk2] (blk3) {$W4$}

    node [block, fill=orange!30, minimum width = 4.8em, below = 0.5 cm of blk1] (blk11) {\Large$\bar{\bfA}_4 $}
    node [block, fill=orange!30, minimum width = 4.8em, below = 0.5 cm of blk2] (blk21) {\Large$\bar{\bfA}_5$}
    node [block, fill=orange!30, minimum width = 4.8em, below = 0.5 cm of blk3] (blk31) {\Large$\bar{\bfA}_6$}
    node [block, fill=orange!30, minimum width = 4.8em, below = 0.5 cm of blk4] (blk41) {\Large$\bar{\bfA}_2$}
    node [block, fill=orange!30, minimum width = 4.8em, below = 0.005 cm of blk41] (blk42) {\Large$\bar{\bfA}_3$}
    node [block, fill=orange!30, minimum width = 4.8em, below = 0.5 cm of blk5] (blk51) {\Large$\bar{\bfA}_0$}
    node [block, fill=orange!30, minimum width = 4.8em, below = 0.005 cm of blk51] (blk52) {\Large$\bar{\bfA}_1$}
    
    node at (0,-2.8) (blk0) {\large (a): Data generation among the active clients.}

;
\draw[->](blk1) -- node{} (blk11);
\draw[->](blk2) -- node{} (blk21);
\draw[->](blk3) -- node{} (blk31);
\draw[->](blk4) -- node{} (blk41);
\draw[->](blk5) -- node{} (blk51);

\end{tikzpicture}
}
\end{subfigure} 
\begin{subfigure}
\centering
\;
\resizebox{0.92\linewidth}{!}{

\definecolor{mycolor6}{rgb}{0.92941,0.69412,0.12549}%
\definecolor{mycolor7}{rgb}{0.74902,0.00000,0.74902}%
\definecolor{mycolor8}{rgb}{0.60000,0.20000,0.00000}%

\begin{tikzpicture}[auto, thick, node distance=2cm, >=triangle 45]

\draw
    
    node [sum, minimum size = 1.15cm, fill=blue!30] (bl1) {$W2$}
    node [block,minimum height = 0.85cm, minimum width = 1.5cm,fill=blue!30] at (-3,0.3) (bl4) {$W1$}
    node [block, minimum height = 0.85cm, minimum width = 1.5cm,fill=blue!30,left = 1.8 cm of bl4] (bl5) {$W0$}
    node [sum, minimum size = 1.15cm,fill=blue!30,right = 1.8 cm of bl1] (bl2) {$W3$}
    node [sum, minimum size = 1.15cm,fill=blue!30] at (-4.5,-3) (bl3) {$W4$}
    node [sum, minimum size = 1.15cm,fill=green!30,right = 2.2 cm of bl3] (bl6) {$W5$}
    node [sum, minimum size = 1.15cm,fill=green!30,right = 1.8 cm of bl6] (bl7) {$W6$}

    node [block, fill=mycolor6!30, minimum width = 4.8em, below = 0.5 cm of bl1] (bl11) {$\{ \bar{\bfA}_4, \bar{\bfA}_5, \bar{\bfA}_6 \}$}
    node [block, fill=mycolor6!30, minimum width = 4.8em, below = 0.5 cm of bl2] (bl21) {$\{ \bar{\bfA}_5, \bar{\bfA}_6, \bar{\bfA}_0 \}$}
    node [block, fill=mycolor6!30, minimum width = 4.8em, below = 0.5 cm of bl3] (bl31) {$\{ \bar{\bfA}_6, \bar{\bfA}_0, \bar{\bfA}_1 \}$}
    node [block, fill=mycolor6!30, minimum width = 4.8em, below = 0.5 cm of bl4] (bl41) {$\{ \bar{\bfA}_2, \bar{\bfA}_3, \bar{\bfA}_4 \}$}
    node [block, fill=mycolor6!30, minimum width = 4.8em, below = 0.005 cm of bl41] (bl42) {$\{ \bar{\bfA}_3, \bar{\bfA}_4, \bar{\bfA}_5 \}$}
    node [block, fill=mycolor6!30, minimum width = 4.8em, below = 0.5 cm of bl5] (bl51) {$\{ \bar{\bfA}_0, \bar{\bfA}_1, \bar{\bfA}_2 \}$}
    node [block, fill=mycolor6!30, minimum width = 4.8em, below = 0.005 cm of bl51] (bl52) {$\{ \bar{\bfA}_1, \bar{\bfA}_2, \bar{\bfA}_3 \}$}
    node [block, fill=mycolor6!30, minimum width = 4.8em, below = 0.5 cm of bl6] (bl61) {$\{ \bar{\bfA}_0, \bar{\bfA}_1, \bar{\bfA}_2 \}$}
    node [block, fill=mycolor6!30, minimum width = 4.8em, below = 0.5 cm of bl7] (bl71) {$\{ \bar{\bfA}_1, \bar{\bfA}_2, \bar{\bfA}_3 \}$}
    
     node at (-1.3,-5.5) (blk00) {\large (b): Coded submatrix allocation among all the clients.}
     
;
\draw[->](bl1) -- node{} (bl11);
\draw[->](bl2) -- node{} (bl21);
\draw[->](bl3) -- node{} (bl31);
\draw[->](bl4) -- node{} (bl41);
\draw[->](bl5) -- node{} (bl51);
\draw[->](bl6) -- node{} (bl61);
\draw[->](bl7) -- node{} (bl71);

\end{tikzpicture}
}
\end{subfigure}
\vspace{-0.1 in}
\caption{\small A heterogeneous system of $n = 7$ clients where $k_A = 5$ and $s = 2$. (a) Each of $W_0$ and $W_1$ generates $2 \alpha$ columns and each of $W_2, W_3$ and $W_4$ generates $\alpha$ columns of $\bfA \in \mathbb{R}^{t \times r}$, where $\alpha = r/7$. (b) Once the jobs are assigned, the system is resilient to stragglers.}
\label{hetero_ex}
\vspace{0.1 in}
\end{figure}
Consider the example in Fig. \ref{hetero_ex} consisting of $n = 7$ clients. There are $k_A = 5$ active clients which are responsible for data matrix generation. Let us assume, $W_0$ and $W_1$ are of type $1$ clients which generate twice as many columns of $\bfA$ than $W_2, W_3$ and $W_4$ which are of type $0$ clients. The jobs are assigned to all clients (including $s = 2$ passive clients) according to Fig. \ref{hetero_ex}(b). It can be verified that this scheme is resilient to {\it two} type $0$ clients or {\it one} type $1$ client.
\end{example}
\vspace{-0.1in}

\vspace{-0.05 in}
\section{Numerical Evaluation}
\vspace{-0.05 in}
\label{sec:num_exp}
In this section, we compare the performance of our proposed approach against different competing methods \cite{yu2017polynomial,8849468,8919859} in terms of different metrics for distributed matrix computations from the federated learning aspect. Note that the approaches in \cite{prakash2020coded ,dhakal2019coded} require the edge devices to transmit some coded columns of matrix $\bfA$ to the server which is not aligned with our assumptions. In addition, the approaches in \cite{das2020coded} and \cite{dasunifiedtreatment} do not follow the same network learning architecture as ours. Therefore, we did not include them in our comparison.

\vspace{0.05 in}{\bf Communication Delay}: We consider a homogeneous system of $n = 20$ clients each of which is a {\tt t2.small} machine in AWS (Amazon Web Services) Cluster. Here, each of $k_A = 18$ active clients generates $\bfA_i$ of size $12000 \times 1000$, thus the size of $\bfA$ is $12000 \times 18000$. The server sends the parameter vector $\bfx$ of length $12000$ to all $20$ clients including $s = 2$ passive clients. Once the preprocessing and computations are carried out according to Alg. \ref{Alg:New_matvec}, the server recovers $\bfA^T \bfx$ as soon as it receives results from the fastest $k_A = 18$ clients, thus the system is resilient to any $s = 2$ stragglers. 

\begin{table}[t]
\caption{\small Comparison among different approaches in terms of communication delay for a system with $n = 20$, $k_A = 18$ and $s = 2$.}
\vspace{-0.2 in}
\label{table:comdelay}
\begin{center}
\begin{small}
\begin{sc}
\begin{tabular}{c c c c c}
\hline
\toprule
Poly  & Ortho- & RKRP & Conv. & {\textbf{Prop.}} \\
Code  \cite{yu2017polynomial}  & Poly\cite{8849468} & Code\cite{8919859} & Code\cite{das2019random} &  {\textbf{Sch.}} \\
 \midrule
$14.13 \, s$ & $14.02 \, s$  & $2.49 \, s$ &  $2.56 \, s$ & $\mathbf{2.21 \, s}$  \\
\bottomrule
\end{tabular}
\end{sc}
\end{small}
\end{center}
\vspace{-0.22in}
\end{table}%

Table \ref{table:comdelay} shows the comparison of the corresponding communication delays (caused by data matrix transmission) among different approaches. The approaches in \cite{yu2017polynomial,8849468} require all active clients to transmit their generated submatrices to all other edge devices. Thus, they lead to much more communication delay than our proposed method which needs an edge device to transmit data to only up to $s + 1 = 3$ other devices. Note that the methods in \cite{8919859, das2019random} involve similar amounts of communication delay as ours, however, they have other limitations in terms of privacy and computation time as discussed next.

\vspace{0.05 in}{\bf Privacy}: Information leakage is introduced in FL when we consider the transmission of local data matrices to other edge devices. To protect against privacy leakage, any particular client should have access to a limited portion of the whole data matrix. Consider the heterogeneous system in example \ref{exmpl:hetero} where the clients are honest but curious. In this scenario, the approaches in \cite{yu2017polynomial,8849468,8919859, das2019random} would allow clients to access the whole matrix $\bfA$. In our approach, as shown in Fig. \ref{hetero_ex}, clients $W_0$ and $W_1$ only have access to $4/7$-th fraction of $\bfA$ and clients $W_2$, $W_3$ and $W_4$  have access to $3/7$-th fraction of $\bfA$. This provides significant protection against privacy leakage. 

\begin{table}[t]
\caption{{\small Per client product computation time where $n = 30$, $k_A = 28, s = 2$ and $\zeta = 95\%$, $98\%$ or $99\%$ entries of $\bfA$ are zero.}}
\vspace{-0.22 in}
\label{worker_comp}
\begin{center}
\begin{small}
\begin{sc}
\begin{tabular}{c c c c c}
\hline
\toprule
\multirow{2}{*}{Methods} & \multicolumn{3}{c}{Product Comp. Time (in ms)} &   \\ \cline{2-4} 
&  $\zeta = 99\%$ &  $\zeta= 98\%$ & $\zeta = 95\%$ \\
 \midrule
Poly Code  \cite{yu2017polynomial} & $54.7$ & $55.2$  & $53.7$ \\
Ortho-Poly \cite{8849468}    & $54.3$ & $54.8$  & $55.2$ \\
RKRP Code \cite{8919859}    & $55.1$ & $53.4$  & $53.7$ \\
Conv. Code \cite{das2019random}    & $56.2$ & $55.8$  & $56.8$ \\
{\textbf{Prop. Scheme}}  & $\mathbf{ 14.9}$ & $\mathbf{ 21.1}$  & $\mathbf{ 29.6}$ \\
\bottomrule
\end{tabular}
\end{sc}
\end{small}
\end{center}
\vspace{-0.08in}
\end{table}%

\vspace{0.03 in}{\bf Product Computation Time for Sparse Matrices}: Consider a system with $n = 30$ clients where $k_A = 28$ and $s = 2$. We assume that $\bfA$ is sparse, where each active client generates a sparse submatrix of size $40000 \times 1125$. We consider three different scenarios with three different sparsity levels for $\bfA$ where randomly chosen $95\%$, $98\%$ and $99\%$ entries of $\bfA$ are zero. Now we compare our proposed Alg. \ref{Alg:New_matvec} against different methods in terms of per client product computation time (the required time for a client to compute its assigned submatrix-vector product) in Table \ref{worker_comp}. The methods in \cite{yu2017polynomial, 8849468, 8919859, das2019random} assign linear combinations of $k_A = 28$ submatrices to the clients. Hence, the inherent sparsity of $\bfA$ is destroyed in the encoded submatrices. On the other hand, our approach combines  only $s +1 =3$ submatrices to obtain the coded submatrices. Thus, the clients require a significantly less amount of time to finish the respective tasks in comparison to \cite{yu2017polynomial, 8849468, 8919859, das2019random}.

%\cite{dhakal2019coded}
% Below is an example of how to insert images. Delete the ``\vspace'' line,
% uncomment the preceding line ``\centerline...'' and replace ``imageX.ps''
% with a suitable PostScript file name.
% -------------------------------------------------------------------------
% \begin{figure}[htb]

% \begin{minipage}[b]{1.0\linewidth}
%   \centering
%   \centerline{\includegraphics[width=8.5cm]{image1}}
% %  \vspace{2.0cm}
%   \centerline{(a) Result 1}\medskip
% \end{minipage}
% %
% \begin{minipage}[b]{.48\linewidth}
%   \centering
%   \centerline{\includegraphics[width=4.0cm]{image3}}
% %  \vspace{1.5cm}
%   \centerline{(b) Results 3}\medskip
% \end{minipage}
% \hfill
% \begin{minipage}[b]{0.48\linewidth}
%   \centering
%   \centerline{\includegraphics[width=4.0cm]{image4}}
% %  \vspace{1.5cm}
%   \centerline{(c) Result 4}\medskip
% \end{minipage}
% %
% \caption{Example of placing a figure with experimental results.}
% \label{fig:res}
% %
% \end{figure}

% To start a new column (but not a new page) and help balance the last-page
% column length use \vfill\pagebreak.
% -------------------------------------------------------------------------
%\vfill
%\pagebreak

\vfill\pagebreak

% References should be produced using the bibtex program from suitable
% BiBTeX files (here: strings, refs, manuals). The IEEEbib.bst bibliography
% style file from IEEE produces unsorted bibliography list.
% -------------------------------------------------------------------------
\bibliographystyle{IEEEbib}
\bibliography{citations}

\end{document}